\newtheorem{theorem}{Theorem}
\newtheorem{lemma}{Lemma}
\newcommand{\ratio}[0]{\alpha}
\begin{document}
 
\title{A Note on Approximate Revenue Maximization with Two Items}
\author{Ron Kupfer \thanks{School of Computer Science and Engineering, Hebrew University of Jerusalem}}
\date{}
\maketitle

\begin{abstract}
We consider the problem of maximizing revenue when selling $2$ items to a single buyer with known valuation distributions.
In \cite{hart2012approximate}, Hart and Nisan showed that selling each item separately using the optimal Myerson's price, gains at least half of the revenue attainable by optimal auction for two items.
We show that in case the items have different revenues when sold separately the bound can be tightened.
\end{abstract}

\section{Introduction}

Following Hart and Nisan work in \cite{hart2012approximate} we are looking at the scenario of a single seller selling multiple items to a single buyer. This paper is limited to the case of two items with independently distributed values and an additive valuation. The distributions (given by a cumulative distribution $F$) are known to the seller but not the chosen valuations.

For one item the problem is completely solved by Myerson's classic result \cite{myerson1981optimal}. If the seller offer to sell it for a price $p$ then the probability that the buyer will buy is $1 - F(p)$, and the revenue will be $p\cdot(1 - F(p))$. The seller will choose a price $p^*$ that maximizes this expression. Myerson's characterization of optimal auctions concludes that the take-it-or-leave-it offer at the above price $p^*$ yields the optimal revenue among all mechanisms. Myerson's result also applies when there are multiple buyers, in which case $p^*$ would be the reserve price in a second price auction.

Hart and Nisan showed that selling each item separately using the optimal Myerson's price, gains at least half of the revenue attainable by optimal auction for two items.

In addition, they showed that when the items valuations are drown from two i.i.d distribution the ratio between the revenues is tighter and stand on at least $73\%$.

We show that on the other hand, when the two distributions of the values correspond to two different revenues (when selling each item separately) the revenue ratio is again grater then half. Thus, covering another set of distributions.

\subsection*{Notations and Preliminaries}
\subsubsection*{Mechanisms}
\begin{itemize}
\item $x=(x_1,...,x_k)$ - buyer valuation getting each item.
\item $q(x)=(q_1,...,q_k)$ - denotes the probability
that item i is allocated to the buyer when his valuation is $x$
\item $s(x)=(s_1,...,s_k)$ - denotes the expected payment that the seller receives from the buyer when the buyer’s valuation is $x$
\item $b(x)=(b_1,...,b_k)$ - denotes the utility of the buyer when his valuation is $x$,
i.e., $b(x) = x\cdot q(x) - s(x)$
\item IR - Individually Rational - $b(x)\geq 0$ for all $x$
\item IC - Incentive Compatible - for all $x,x'$: $x\cdot q(x) - s(x)\geq x\cdot q(x') - s(x')$
\end{itemize}

\subsubsection*{Revenue}

For a cumulative distribution $F$ on \(\mathbb{R}^{k}_{+}\) (for \(k \geq 1\)), we consider the optimal revenue
obtainable from selling $k$ items to a (single, additive) buyer whose valuation for the $k$
items is jointly distributed according to $F$:
\begin{itemize}
\item \(Rev(F)\) is the maximal revenue obtainable by any incentive compatible
and individually rational mechanism.
\item \(SRev(F)\) is the maximal revenue obtainable by selling each item separately.
\end{itemize}

\section{Result}
\begin{theorem}For every one-dimensional distributions $F_1$ and $F_2$,
$$\left(2-\frac{\alpha - 1 - ln(\alpha)}{1+\alpha}\right)SRev(F_1\times F_2) \geq Rev(F_1 \times F_2)$$
where $\alpha=\max{\{\frac{Rev(F_1)}{Rev(F_2)},\frac{Rev(F_2)}{Rev(F_1)}\}}$
\end{theorem}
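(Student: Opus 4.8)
The plan is to reduce the claimed inequality to a clean upper bound on $Rev(F_1\times F_2)$ and then establish that bound by refining the core--tail/sub-additivity machinery behind the Hart--Nisan factor-$2$ result. Assume without loss of generality that $r_1 := Rev(F_1) \ge r_2 := Rev(F_2)$, so that $\alpha = r_1/r_2 \ge 1$. Because the two valuations are independent and the buyer is additive, selling each item at its own Myerson price extracts the two single-item revenues simultaneously and they add, giving $SRev(F_1\times F_2) = r_1 + r_2 = r_2(1+\alpha)$. Substituting this into the statement and simplifying, the inequality is equivalent to
\[
Rev(F_1\times F_2) \;\le\; r_1 + 3r_2 + r_2\ln\alpha .
\]
At $\alpha = 1$ this is exactly $Rev \le 2\,SRev$, and since $\ln\alpha < \alpha - 1$ for $\alpha>1$ it is strictly stronger whenever the two items have different separate revenues, which is the desired refinement.

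The two ingredients I would use are standard. First, an elementary tail bound: for a single item, offering price $z$ yields revenue $z\,(1-F_i(z)) \le Rev(F_i)$, hence $1-F_i(z) \le r_i/z$ for every $z>0$. Second, a sub-additivity lemma in the spirit of Hart--Nisan: for an additive buyer, $Rev(F\times G) \le Rev(F) + Val(G)$, where $Val(G)=\mathbb{E}_{y\sim G}[y]$ is the expected value. This is proved by the marginal-mechanism argument --- fixing the optimal two-item mechanism and integrating out one coordinate yields a valid single-item mechanism for the first item, while the contribution of the second coordinate to the payment can never exceed its value.

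Now I would apply these asymmetrically. Cap the smaller-revenue item at a threshold $t$, writing $\hat F_2$ for the distribution of $\min(v_2,t)$; this makes $Val(\hat F_2) = \int_0^t (1-F_2(z))\,dz$ finite. Sub-additivity, with item $1$ charged at its full revenue and item $2$ routed through its capped value, gives $Rev(F_1\times\hat F_2) \le r_1 + \int_0^t(1-F_2(z))\,dz$. Splitting the integral at $z=r_2$ and using $1-F_2(z)\le 1$ below $r_2$ and $1-F_2(z)\le r_2/z$ above it yields $\int_0^t(1-F_2)\,dz \le r_2 + r_2\ln(t/r_2)$ for $t\ge r_2$ --- this is precisely where the logarithm is born. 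Choosing $t=r_1$ converts this into $r_1 + r_2 + r_2\ln\alpha$. It is essential that the item sent through the value channel is the one with the \emph{smaller} revenue, since that places the coefficient $r_2$, not $r_1$, in front of the logarithm; the reverse assignment would be strictly worse. The contribution of item $2$'s values above $t$, controlled through the same $r_i/z$ estimate, must then account for the remaining additive $2r_2$.

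The main obstacle is exactly this tail step. Because $1-F_2$ may decay like $r_2/z$, the expected value of item $2$'s tail can be infinite, so one cannot bound the tail's contribution by its expected value; it must be bounded as a \emph{revenue}, using that the excess $(v_2-t)^+$ behaves like a fresh single-item instance whose revenue is controlled by $r_2$. Making the decomposition $Rev(F_1\times F_2) \le Rev(F_1\times\hat F_2) + (\text{tail})$ rigorous, and verifying that the tail contributes no more than the additive $2r_2$ allotted to it, so that the constant in $r_1+3r_2+r_2\ln\alpha$ is respected, is the delicate part. Everything else --- the reduction, the core value integral, and the emergence of the logarithm --- is routine once the threshold $t=r_1$ is fixed.
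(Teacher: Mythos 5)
Your reduction is correct and lands exactly on the paper's target inequality: with $r_1\geq r_2$ and $\alpha=r_1/r_2$, the theorem is equivalent to $Rev(F_1\times F_2)\leq r_1+3r_2+r_2\ln\alpha$, and your core computation (the Markov-type bound $1-F_i(z)\leq r_i/z$, the integral split at $r_2$, the threshold $t=r_1$) correctly yields $r_1+r_2+r_2\ln\alpha$ for the capped instance. But there is a genuine gap precisely at the step you yourself flag as ``the delicate part'': you never establish a decomposition of the form $Rev(F_1\times F_2)\leq Rev(F_1\times\hat F_2)+(\text{tail})$ with the tail bounded by $2r_2$, and this is not a routine verification --- it is the entire difficulty. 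Capping cannot be waved through, since multi-item revenue is not monotone under stochastic dominance (Hart--Reny), so $Rev(F_1\times\hat F_2)$ is not automatically comparable to the revenue the original mechanism collects on the core. And the natural ways to make the split rigorous overshoot your budget: sub-domain stitching on $\{v_2\leq t\}$ versus $\{v_2>t\}$, followed by the Hart--Nisan factor-$2$ bound on the tail event, gives a tail contribution of about $4r_2$ (using $\Pr[v_2>t]\leq r_2/r_1$ and $\Pr[v_2>t]\cdot Rev(F_2\mid v_2>t)\leq r_2$), i.e.\ a final bound of $r_1+5r_2+r_2\ln\alpha$; while running the marginal-mechanism argument on the tail event produces a term $\Pr[v_2>t]\cdot\mathbb{E}[v_1]$ (or $\mathbb{E}[v_2\mathbbm{1}_{v_2>t}]$), and these expectations can be infinite --- the same infinite-mean obstruction you noticed for item $2$ bites for item $1$ as well.

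The paper avoids this entirely by choosing a different split: not at a fixed threshold, but according to which of the two values is larger. For each fixed $y$, the marginal mechanism for item $1$ loses at most $y$ on the event $\{X\geq y\}$, and symmetrically for item $2$; summing the two inequalities gives $Rev(F_1\times F_2)\leq R_1+R_2+\mathbb{E}[\min\{X,Y\}]$ (its Lemma 1). The loss term is then always finite, because $\mathbb{E}[\min\{X,Y\}]=\int_0^\infty\bar F_1(u)\bar F_2(u)\,du$ and the \emph{product} of the two Markov bounds decays like $1/u^2$; splitting this integral at $R_2$ and $R_1$ --- exactly your computation --- gives $\mathbb{E}[\min\{X,Y\}]\leq(2+\ln\alpha)R_2$ (its Lemma 2), and the two lemmas combine to $r_1+3r_2+r_2\ln\alpha$ on the nose. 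So your ingredients are the right ones, but organizing them around a deterministic cap at $t=r_1$ creates a tail problem you have not solved (and which simple tools apparently cannot solve within the $2r_2$ allowance), whereas organizing them around the random event $\{X\geq Y\}$ makes the loss term $\min\{X,Y\}$ and closes the argument.
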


From now on we denote $R_i=Rev(F_i)$. w.l.o.g. we assume that $R_1\geq R_2$ and denote $\frac{R_1}{R_2}$ as $\ratio$ ($\geq 1$).

\begin{lemma}\label{r1_r2_min_is_grater_then_rev} Let $A,\ B$ be two items with value distributions $X$ and $Y$ associated with $SRev$ of $R_1,\ R_2$ respectively. $R_1 + R_2 + \mathbb{E}(\min\{X,Y\})\geq \mathbb{E}s(X,Y) $
\end{lemma}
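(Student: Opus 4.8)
The plan is to read $\mathbb{E}s(X,Y)$ as the revenue $Rev(F_1\times F_2)$ of the optimal two-item mechanism $(q,s)$, which is IC and IR, and to charge its expected payment against the two single-item Myerson optima $R_1,R_2$ plus a slack that I will control by $\mathbb{E}(\min\{X,Y\})$. The engine is a conditioning (marginal-mechanism) argument: for each fixed value $y$ of item $B$, I would consider the single-item mechanism on item $A$ with allocation $\hat q(x)=q_1(x,y)$ and payment $\hat\sigma(x)=s(x,y)-y\,q_2(x,y)$. Writing out the two-item IC inequality with $y$ held fixed shows it is exactly the single-item IC inequality for $(\hat q,\hat\sigma)$, and the two-item IR inequality $x q_1+y q_2-s\ge 0$ becomes $x\hat q-\hat\sigma\ge 0$; hence $(\hat q,\hat\sigma)$ is an IC, IR single-item mechanism and Myerson's theorem bounds its revenue by $R_1$. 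A symmetric construction fixing $x$ handles item $B$ and yields the bound $R_2$.

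The reason a naive combination fails, and the first thing I would check, is that one cannot simply bound $s\le x q_1+y q_2$ and replace the surplus by revenue: full surplus can exceed Myerson revenue (e.g. for equal-revenue distributions it is infinite), and symmetrically adding the two conditional bounds doubles $\mathbb{E}s$ and is too lossy. Instead I would split the probability space into the regions $\{X\ge Y\}$ and $\{X<Y\}$. On $\{X\ge Y\}$ one has $Y=\min\{X,Y\}$, so the leftover term $y q_2\le y=\min\{X,Y\}$, and $s\,\mathbf{1}_{X\ge Y}\le\big(\hat\sigma+\min\{X,Y\}\big)\mathbf{1}_{X\ge Y}$; the region $\{X<Y\}$ is treated symmetrically with $x q_1\le\min\{X,Y\}$ and the item-$B$ conditional payment $\tilde\sigma(x,y)=s(x,y)-x\,q_1(x,y)$. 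Summing gives $\mathbb{E}s\le \mathbb{E}(\min\{X,Y\})+\mathbb{E}[\hat\sigma\,\mathbf{1}_{X\ge Y}]+\mathbb{E}[\tilde\sigma\,\mathbf{1}_{X<Y}]$, and it remains to show the two region-restricted payment terms are at most $R_1$ and $R_2$.

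This region-restricted single-item bound is the main obstacle, because the conditional payment $\hat\sigma(x,y)$ may be negative for small $x$, so discarding the region $\{X<Y\}$ could in principle inflate the counted revenue above $R_1$. I would resolve it by truncation: since two-item IC forces $x\mapsto q_1(x,y)$ to be monotone nondecreasing, the truncated allocation $\bar a(x)=q_1(x,y)\mathbf{1}_{x\ge y}$ is still monotone and therefore implementable, and its Myerson payment $\bar\sigma$ defines a valid IC, IR single-item mechanism with revenue at most $R_1$. Using the payment identity $\hat\sigma(x,y)=x\,q_1(x,y)-\int_0^x q_1(t,y)\,dt$, one checks that for $x\ge y$ the conditional payment equals $\bar\sigma(x)$ minus the nonnegative threshold utility $\int_0^y q_1(t,y)\,dt$, so $\mathbb{E}_X[\hat\sigma(X,y)\mathbf{1}_{X\ge y}]\le \mathbb{E}_X[\bar\sigma(X)]\le R_1$ for every fixed $y$; integrating over $y\sim F_2$ gives $\mathbb{E}[\hat\sigma\,\mathbf{1}_{X\ge Y}]\le R_1$, and symmetrically $\mathbb{E}[\tilde\sigma\,\mathbf{1}_{X<Y}]\le R_2$. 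Combining with the previous display yields the claim. The steps I expect to need the most care are verifying IC/IR of the conditional mechanism and getting the threshold-utility bookkeeping in the truncation right.
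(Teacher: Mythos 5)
Your proof has the same skeleton as the paper's: the same marginal-mechanism construction (allocation $q_1(\cdot,y)$ with payment $s(x,y)-y\,q_2(x,y)$, which preserves the buyer's payoff and hence IC and IR), and the same split of the expected payment according to which of $X,Y$ is larger, using $y\,q_2\le \min\{X,Y\}$ on $\{X\ge Y\}$. Where you genuinely add something is the region-restricted bound $\mathbb{E}\left[\hat\sigma(X,y)\mathbbm{1}_{X\ge y}\right]\le R_1$. The paper obtains it by simply writing $\mathbb{E}_{x\sim X}(\tilde s^y(x))\ \ge\ \mathbb{E}(\tilde s^y(x)\mathbbm{1}_{x\ge y})$, i.e.\ by discarding the region $\{x<y\}$ with no justification; as you observe, this step needs an argument, and in fact it is false for general IC, IR mechanisms, since the marginal payment can be negative below $y$. (Concretely, for the mechanism that allocates both items for free, $\tilde s^y\equiv -y$, so $\mathbb{E}[\tilde s^y]=-y<-y\,\Pr(X\ge y)=\mathbb{E}[\tilde s^y\mathbbm{1}_{X\ge y}]$, violating the paper's displayed inequality, even though the two endpoints of its chain are still correctly ordered.) Your truncation lemma --- pass to the monotone allocation $\bar a(x)=q_1(x,y)\mathbbm{1}_{x\ge y}$, take its (nonnegative) Myerson payments $\bar\sigma$, and check via the payment identity that $\hat\sigma(x)\le\bar\sigma(x)-\int_0^y q_1(t,y)\,dt\le\bar\sigma(x)$ for $x\ge y$ --- is exactly the missing ingredient, so your write-up does not merely reproduce the paper's proof; it repairs it. Two minor points: the payment identity you quote, $\hat\sigma(x)=x\,q_1(x,y)-\int_0^x q_1(t,y)\,dt$, holds with equality only when the zero-type utility $b(0,y)$ vanishes; by IR it is in general only an upper bound on $\hat\sigma(x)$, which is the direction you need anyway. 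Also, your use of the exact partition $\{X\ge Y\}\cup\{X<Y\}$ is slightly cleaner than the paper's overlapping split $\mathbbm{1}_{X\ge Y}+\mathbbm{1}_{Y\ge X}$, which tacitly assumes the payment is nonnegative on the diagonal.
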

\begin{proof}
we use a similar technique to the one used in appendix A in \cite{hart2012approximate}. Take any IC and IR mechanism $(q,s)$. We will split its expected revenue into two parts, according to which one of $X$ and $Y$ is maximal: $\mathbb{E}(s(X,Y)) \leq \mathbb{E}(s(X,Y)\mathbbm{1}_{X\geq Y} ) + \mathbb{E}(s(X,Y )\mathbbm{1}_{Y\geq X})$

for every fixed value $y$ of $Y$ define a mechanism $(\tilde{q}^y, \tilde{s}^y)$ for $X$ by $\tilde{q}^y(x) := q_1(x,y)$ and $\tilde{s}^y(x) :=s(x,y)-yq_2(x,y)$ for every $x$ (so the buyer’s payoff remains the same: $\tilde{b}^y(x) = b(x,y)$).

The mechanism $(\tilde{q}^y, \tilde{s}^y)$ is IC and IR for $X$, since $(q,s)$ was IC and IR for $(X,Y)$ (only the IC constraints with $y$ fixed, i.e., $(x′,y)$ vs. $(x,y)$, matter). We do the same for $(\tilde{q}^x, \tilde{s}^x)$.
Therefore,
\begin{equation} \label{item1}
R_1 \geq \mathbb{E}_{x\sim X}(S_1^y(x)) \geq \mathbb{E}(S_1^y(x)\mathbbm{1}_{x\geq y}) \geq \mathbb{E}((s(x,y)-y)\mathbbm{1}_{X\geq y})
\end{equation}
\begin{equation} \label{item2}
R_2 \geq \mathbb{E}_{y\sim Y}(S_2^x(y)) \geq \mathbb{E}(S_2^x(y)\mathbbm{1}_{y\geq x}) \geq \mathbb{E}((s(x,y)-x)\mathbbm{1}_{Y\geq x})
\end{equation}

summing equations \ref{item1}, \ref{item2} and taking expectation over $X, Y$ we get: 
\begin{equation}
R_1 + R_2 \geq \mathbb{E}s(X,Y) - \mathbb{E}(Y\mathbbm{1}_{X\geq Y} + X\mathbbm{1}_{Y\geq X}) =  \mathbb{E}s(X,Y) - \mathbb{E}(\min\{X,Y\})
\end{equation}

\end{proof}

\begin{lemma}\label{r1_much_bigger_r2}
$\mathbb{E}(\min\{X,Y\})\leq (2+\ln(\ratio))R_2$
\end{lemma}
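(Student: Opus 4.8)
The plan is to reduce the bound to a one-dimensional tail estimate and then exploit the fact that the single-item optimal revenue controls the tail of its own distribution. First I would invoke the layer-cake (tail integral) formula together with the independence of $X$ and $Y$ to write
\[
\mathbb{E}(\min\{X,Y\}) = \int_0^\infty \Pr[\min\{X,Y\}\geq t]\,dt = \int_0^\infty \Pr[X\geq t]\,\Pr[Y\geq t]\,dt,
\]
using $\{\min\{X,Y\}\geq t\}=\{X\geq t\}\cap\{Y\geq t\}$ and independence to factor the probability.

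The key input is the standard Myerson tail bound: offering item $i$ at a posted price $t$ earns revenue $t\cdot\Pr[\,\cdot\geq t\,]$, which cannot exceed the optimal single-item revenue $R_i$, so $\Pr[X\geq t]\leq \min\{1,R_1/t\}$ and $\Pr[Y\geq t]\leq \min\{1,R_2/t\}$. Substituting these into the integrand and recalling $R_1=\alpha R_2\geq R_2$, I would split the range of integration at the two thresholds $R_2$ and $R_1$: on $[0,R_2]$ use the trivial bound $1$ for the product; on $[R_2,R_1]$ use $\Pr[X\geq t]\leq 1$ together with $\Pr[Y\geq t]\leq R_2/t$; and on $[R_1,\infty)$ use $\Pr[X\geq t]\,\Pr[Y\geq t]\leq R_1R_2/t^2$.

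Evaluating the three pieces gives $\int_0^{R_2}1\,dt=R_2$, then $\int_{R_2}^{R_1}(R_2/t)\,dt=R_2\ln(R_1/R_2)=R_2\ln\alpha$, and finally $\int_{R_1}^{\infty}(R_1R_2/t^2)\,dt=R_2$. Summing the three contributions yields exactly $(2+\ln\alpha)R_2$, which is the claimed bound.

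I expect the only genuine subtlety to be the choice of split points: the estimate is clean precisely because $R_2$ and $R_1$ are the crossover values at which each factor switches from the trivial bound $1$ to the tail bound $R_i/t$, and getting these three regions right is what produces the exact constant. The tail bound itself and the elementary integrals are routine. A minor point of care is the possible presence of atoms in $F_1$ and $F_2$, but since only the one-sided inequality $t\cdot\Pr[\,\cdot\geq t\,]\leq R_i$ is needed—and this holds for the posted-price mechanism irrespective of atoms—it causes no difficulty.
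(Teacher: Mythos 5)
Your proof is correct and is essentially identical to the paper's: both use the tail-integral formula $\mathbb{E}(\min\{X,Y\})=\int_0^\infty \bar{F_1}(t)\bar{F_2}(t)\,dt$, the Myerson bound $\bar{F_i}(t)\leq\min\{1,R_i/t\}$, and the same split of the integral at $R_2$ and $R_1$. Your write-up simply makes explicit the tail bound and the atom issue that the paper leaves implicit.
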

\begin{proof}

\[\mathbb{E}(\min\{X,Y\}) = \int_{0}^{\infty}{\bar{F_1}(u)\bar{F_2}(u)du}\]
\[ \leq \int_{0}^{R_2}{1\cdot du} + \int_{R_2}^{R_1}{\frac{R_2}{u} du} + \int_{R_1}^{\infty}{\frac{R_1R_2}{u^2} du}\ = R_2 + R_2\ln\left(\frac{R_1}{R_2}\right) + \frac{R_1R_2}{R_1} = R_2(2+\ln\left(\frac{R_1}{R_2}\right))\]
\end{proof}

combining the lemmas we get that
\[\mathbb{E}s(X,Y)\leq R_1 + (3+\ln(\frac{R_1}{R_2}))R_2\]

\[ = \left(2-\frac{\alpha - 1 - ln(\alpha)}{1+\alpha}\right)(R_1+R_2)\]

thus, finishing the proof of Theorem 1.

\bibliography{main.bib} 
\bibliographystyle{plain}
\end{document}